\newcommand{\keywords}[1]{\par\addvspace\baselineskip
\noindent\keywordname\enspace\ignorespaces#1}
\begin{document}

\mainmatter  % start of an individual contribution

% first the title is needed
\title{On the Transmit Beamforming for MIMO Wiretap Channels: Large-System Analysis}

% a short form should be given in case it is too long for the running head
\titlerunning{On the Transmit Beamforming for MIMO Wiretap Channels}

% the name(s) of the author(s) follow(s) next
%
% NB: Chinese authors should write their first names(s) in front of
% their surnames. This ensures that the names appear correctly in
% the running heads and the author index.
%
\author{Maksym A. Girnyk
\and Fr\'{e}d\'{e}ric Gabry%
\thanks{The present research was supported by the Swedish Research Council (VR).}%
 \and Mikko Vehkaper\"{a}\and Lars K. Rasmussen \and Mikael Skoglund}
\authorrunning{On the Transmit Beamforming for MIMO Wiretap Channels}
% (feature abused for this document to repeat the title also on left hand pages)

% the affiliations are given next; don't give your e-mail address
% unless you accept that it will be published
\institute{Communication Theory Laboratory, KTH Royal Institute of Technology\\
Osquldas v\"{a}g 10, 10044 Stockholm, Sweden\\
\mailsa\\
\url{http://commth.ee.kth.se}}

%
% NB: a more complex sample for affiliations and the mapping to the
% corresponding authors can be found in the file "llncs.dem"
% (search for the string "\mainmatter" where a contribution starts).
% "llncs.dem" accompanies the document class "llncs.cls".
%

\toctitle{Lecture Notes in Computer Science}
\tocauthor{ICITS-2013}
\maketitle

% --- Latin phrases ---
\newcommand{\cf}{\textit{cf.}}
\newcommand{\eg}{\textit{e.g.}}
\newcommand{\etal}{\textit{et al.}}
\newcommand{\etc}{\textit{etc.}}
\newcommand{\ie}{\textit{i.e.}}
\newcommand{\nb}{\textit{n.b.}}
\newcommand{\perse}{\textit{per se}}
\newcommand{\via}{\textit{via }}
\newcommand{\viceversa}{\textit{vice versa}}
\newcommand{\vide}{\textit{vide }}
\newcommand{\viz}{\textit{viz.}}
\newcommand{\vs}{\textit{vs.}\;}

\newcommand{\eps}{\varepsilon}
\newcommand{\E}{\textsf{E}}
\newcommand{\T}{\textsf{T}}
\renewcommand{\H}{\textsf{H}}
\renewcommand{\l}{\ell}
\newcommand{\til}[1]{\tilde{#1}}
\newcommand{\prim}[1]{#1'}
\newcommand{\diag}{\text{diag}}
\newcommand{\re}{\textrm{Re}}
\newcommand{\im}{\textrm{Im}}
\newcommand{\tr}{\textnormal{\textrm{tr}}}
\newcommand{\writeunder}[2]{\underset{#1}{\underbrace{#2}}}
\newcommand{\round}[1]{\left\lfloor #1 \right\rceil}

\newcommand{\brc}[1]{\left( #1 \right)}
\newcommand{\sqbrc}[1]{\left[ #1 \right]}
\newcommand{\figbrc}[1]{\left\{ #1 \right\} }
\newcommand{\coef}[2]{\textnormal{coef}\left\{#1,#2\right\}}
\newcommand{\ul}[1]{\underline{#1}}
\newcommand{\vv}[1]{\mathbf{#1}}
\newcommand{\bb}[1]{\mathbb{#1}}

\newcommand{\xbl}[1]{\boldsymbol{#1}}
\newcommand{\bx}{\boldsymbol{x}}
\newcommand{\by}{\boldsymbol{y}}
\newcommand{\bs}{\boldsymbol{s}}
\newcommand{\bz}{\boldsymbol{z}}
\newcommand{\bn}{\boldsymbol{n}}
\newcommand{\bv}{\boldsymbol{v}}
\newcommand{\bh}{\boldsymbol{h}}
\newcommand{\qtil}{\tilde{q}}
\newcommand{\ptil}{\tilde{p}}
\newcommand{\bH}{\boldsymbol{H}}
\newcommand{\bA}{\boldsymbol{A}}
\newcommand{\bB}{\boldsymbol{B}}
\newcommand{\bX}{\boldsymbol{X}}
\newcommand{\bN}{\boldsymbol{N}}
\newcommand{\bV}{\boldsymbol{V}}
\newcommand{\bQ}{\boldsymbol{Q}}
\newcommand{\bK}{\boldsymbol{K}}
\newcommand{\bI}{\mathbf{I}}
\newcommand{\bP}{\boldsymbol{P}}
\newcommand{\bone}{\boldsymbol{1}}
\newcommand{\boneT}{\boldsymbol{1}^{\T}}
\newcommand{\bzero}{\boldsymbol{0}}
\newcommand{\bG}{\boldsymbol{G}}
\newcommand{\bSigma}{\boldsymbol{\Sigma}}
\newcommand{\calH}{\mathcal{H}}
\def\calC{{\mathcal{C}}}
\newcommand{\calN}{\mathcal{N}}
\newcommand{\calF}{\mathcal{F}}
\newcommand{\calFbar}{\bar{\mathcal{F}}}
\newcommand{\Zbar}{\bar{Z}}
\newcommand{\calQ}{\mathcal{Q}}
\newcommand{\const}{\mathrm{const}}
\newcommand{\refeqn}[1]{(\ref{#1})}
\newcommand{\epsb}{\bar{\varepsilon}}
\newcommand{\xib}{\bar{\xi}}
\renewcommand{\d}{\; \mathrm{d}}

\newcommand{\main}{\textrm{M}}
\newcommand{\eave}{\textrm{E}}
\newcommand{\Rs}{R_{\textrm{s}}}
\newcommand{\rs}{r_{\textrm{s}}}
\newcommand{\bPwf}{\bP_{\mathrm{WF}}^{\star}}
\newcommand{\bPgsvd}{\bP_{\mathrm{GSVD}}^{\star}}

\newcommand{\betam}{\beta_{\main}}
\newcommand{\betae}{\beta_{\eave}}
\newcommand{\Em}{e_{\main}}
\newcommand{\Ee}{e_{\eave}}
\newcommand{\deltam}{\delta_{\main}}
\newcommand{\deltae}{\delta_{\eave}}
\renewcommand{\Im}{I_{\main}}
\newcommand{\Ie}{I_{\eave}}

\newcommand{\bU}{\boldsymbol{U}}
\newcommand{\bUm}{\boldsymbol{U}_{\main}}
\newcommand{\bUe}{\boldsymbol{U}_{\eave}}
\newcommand{\bVm}{\boldsymbol{V}_{\main}}
\newcommand{\bVe}{\boldsymbol{V}_{\eave}}
\newcommand{\bSigmam}{\boldsymbol{\Sigma}_{\main}}
\newcommand{\bSigmae}{\boldsymbol{\Sigma}_{\eave}}
\newcommand{\bsigmam}[1]{\sigma_{\main,#1}}
\newcommand{\bsigmae}[1]{\sigma_{\eave,#1}}
\newcommand{\sign}{\mathrm{sign}}

\newcommand\norm[1]{\|#1\|}
\newcommand\mmse[1]{\langle#1\rangle}
\newcommand\Complex[1]{{\mathds{C}}^{#1}}
\newcommand{\bAm}{\boldsymbol{A}_{\main}}
\newcommand{\bAe}{\boldsymbol{A}_{\eave}}
\newcommand{\bGm}{\boldsymbol{G}_{\main}}
\newcommand{\bGe}{\boldsymbol{G}_{\eave}}
\newcommand{\bGmH}{\boldsymbol{G}_{\main}^{\H}}
\newcommand{\bGeH}{\boldsymbol{G}_{\eave}^{\H}}
\newcommand{\bHm}{\boldsymbol{H}_{\main}}
\newcommand{\bHe}{\boldsymbol{H}_{\eave}}
\newcommand{\bHmH}{\boldsymbol{H}_{\main}^{\H}}
\newcommand{\bHeH}{\boldsymbol{H}_{\eave}^{\H}}
\newcommand{\bRm}{\boldsymbol{R}_{\main}}
\newcommand{\bRe}{\boldsymbol{R}_{\eave}}
\newcommand{\bTm}{\boldsymbol{T}_{\main}}
\newcommand{\bTe}{\boldsymbol{T}_{\eave}}
\newcommand{\bT}{\boldsymbol{T}}
\newcommand{\bWm}{\boldsymbol{W}_{\main}}
\newcommand{\bWe}{\boldsymbol{W}_{\eave}}
\newcommand{\bQm}{\boldsymbol{Q}_{\main}}
\newcommand{\bQe}{\boldsymbol{Q}_{\eave}}
\newcommand{\bym}{\boldsymbol{y}_{\main}}
\newcommand{\bye}{\boldsymbol{y}_{\eave}}
\newcommand{\bnm}{\boldsymbol{n}_{\main}}
\newcommand{\bne}{\boldsymbol{n}_{\eave}}
\newcommand{\bxm}{\boldsymbol{x}_{\main}}
\newcommand{\bxe}{\boldsymbol{x}_{\eave}}
\newcommand{\bzm}{\boldsymbol{z}_{\main}}
\newcommand{\bze}{\boldsymbol{z}_{\eave}}
\newcommand{\Nm}{N_{\main}}
\newcommand{\Ne}{N_{\eave}}
\newcommand{\rhom}{\rho_{\main}}
\newcommand{\rhoe}{\rho_{\eave}}
\newcommand{\epsm}{\eps_{\main}}
\newcommand{\epse}{\eps_{\eave}}
\newcommand{\xim}{\xi_{\main}}
\newcommand{\xie}{\xi_{\eave}}

\newcommand{\bxMmse}{\langle\boldsymbol{x}\rangle}
\newcommand{\bxa}[1]{\boldsymbol{x}^{(#1)}}
\newcommand{\bna}[1]{\boldsymbol{n}^{(#1)}}
\newcommand{\bva}[1]{\boldsymbol{v}^{(#1)}}
\newcommand{\bvaT}[1]{\boldsymbol{v}^{(#1) \T}}
\newcommand{\bvaH}[1]{\boldsymbol{v}^{(#1) \H}}
\newcommand{\bnaH}[1]{\boldsymbol{n}^{(#1) \H}}
\newcommand{\bxaH}[1]{\boldsymbol{x}^{(#1) \H}}
\renewcommand{\bz}{\boldsymbol{z}}
\newcommand{\bzH}{\boldsymbol{z}^{ \H}}
\renewcommand{\bx}{\boldsymbol{x}}
\newcommand{\bxmmse}{\langle\boldsymbol{x}\rangle}
\newcommand{\bxH}{\boldsymbol{x}^{ \H}}
\newcommand{\eqover}[1]{\overset{(#1)}{=}}

\newcommand{\complex}[1]{\mathds{C}^{#1}}
\newcommand{\real}[1]{\mathds{R}^{#1}}

\newcommand{\bR}{\boldsymbol{R}}
\newcommand{\bW}{\boldsymbol{W}}

\newcommand{\calV}{\mathcal{V}}

\begin{abstract}
With the growth of wireless networks, security has become a fundamental issue in wireless communications due to the broadcast nature of these networks. In this work, we consider MIMO wiretap channels in a fast fading environment, for which the overall performance is characterized by the ergodic MIMO secrecy rate. Unfortunately, the direct solution to finding ergodic secrecy rates is prohibitive due to the expectations in the rates expressions in this setting. To overcome this difficulty, we invoke the large-system assumption, which allows a deterministic approximation to the ergodic mutual information. Leveraging results from random matrix theory, we are able to characterize the achievable ergodic secrecy rates. Based on this characterization, we address the problem of covariance optimization at the transmitter. Our numerical results demonstrate a good match between the large-system approximation and the actual simulated secrecy rates, as well as some interesting features of the precoder optimization.
\keywords{MIMO wiretap channel, Large-system approximation, Random matrix theory, Beamforming}
\end{abstract}

\section{Introduction}\label{sec:intro}
% PHY layer security
Wireless networks have developed considerably over the last few decades. As
a consequence of the broadcast nature of these networks, communications can
potentially be attacked by malicious parties, and therefore, security has taken
a fundamental role in today’s communications.
The notion of physical layer security (or information-theoretic security) was developed by Wyner in his fundamental work in~\cite{wyner1975wire}.
The \emph{wiretap channel}, which is the simplest model to study secrecy in communications, was introduced therein, consisting
of a transmitter and two communication channels: to a legitimate receiver and to an eavesdropper.

% Secrecy capacity
The \emph{secrecy capacity} of the wiretap channel is then defined as the maximum
transmission rate from the transmitter to the receiver, provided
that the eavesdropper does not get any information. Finding the
aforementioned secrecy capacity is a difficult problem in
general, due to its non-convex nature.

% MIMO secrecy rates
Notwithstanding, multiple-input multiple-output (MIMO) communications~\cite{foschini1998limits},~\cite{telatar1999capacity}
have become an emerging topic during the last two decades due to their promising capacity gains.
Similar to communication networks without secrecy constraints, the
overall performance for channels with secrecy constraints is limited by the channels’ conditions. In particular,
the legitimate parties need to have some advantage over the eavesdropper
in terms of channel quality to guarantee secure communications. Many techniques
have been proposed to overcome this limitation; one example is the use of
multi-antenna systems, as in~\cite{oggier2011secrecy},~\cite{shafiee2009towards},~\cite{liu2009note} and~\cite{khisti2010mimome},
where the secrecy capacity of the MIMO wiretap channel with multiple eavesdroppers (MIMOME) was characterized.
These results extend to the problem of secret-key agreement over wireless channels, as in~\cite{blochWir} where key-distillation strategies over quasi-static fading channels were investigated,
and~\cite{blochWong} where the secret-key capacity of MIMO ergodic channels was considered.
Finding the precoder matrix achieving the MIMO secrecy capacity has been attempted in~\cite{khisti2010mimome},~\cite{oggier2011secrecy},
however the general form of the optimal covariance matrix remains unknown.
Nevertheless, in certain regimes, the optimal signaling strategies have been derived.
The high SNR case was investigated in~\cite{khisti2010mimome},
while the optimal transmitting scheme at low SNR was found in ~\cite{gur}.
In~\cite{li2010optimal}, the authors characterized the secrecy
capacity for some special cases of channel matrices with certain
rank properties. The special case where the transmitter and
legitimate receiver have two antennas, whereas the eavesdropper
has a single antenna, has been addressed in~\cite{shafiee2009towards}.
More recently, the same problem has been investigated in a computationally efficient way
in~\cite{fakoorian2012optimal} by developing the \emph{generalized
singular value decomposition} (GSVD)-based beamforming at the
transmitter, and deriving the optimal transmit covariance matrix. Optimal signalling in presence of an isotropic eavesdropper has been recently investigated in~\cite{cha}.
In particular the authors in~\cite{cha} found a close-formed expression for the optimal covariance matrix in the isotropic case
as well as lower and upper bounds on the secrecy capacity for the general case.

% Ergodic fading
All the references above considered quasi-static scenario, where the changes in channel gains were slow enough,
so that the transmitter could adapt its radiation pattern to each channel realization.
If, on the contrary, wireless channels are subject to ergodic fading, a codeword spans many fading realizations and
traditional notion of secrecy rate is no longer suitable. Hence, the concept of \emph{ergodic secrecy rate},
proposed in~\cite{liang2008secure} and~\cite{gop08}, has to be used to characterize the performance of the wiretap channel.
In~\cite{khisti2010misome},~\cite{li2011ergodic} and~\cite{vannguyen2011power} the problem of finding achievable
ergodic secrecy rates was addressed for multiple-input single-output (MISO) channels. In the context of MIMO channels,
in~\cite{lin2}, following a previous work in~\cite{lin1},
the authors characterize the secrecy capacity of an uncorrelated MIMOME channel with only statistical channel state information (CSI) at the transmitter
and investigate the optimal input covariance matrix under a total power constraint.

Unfortunately, for general fast-fading MIMOME channels evaluation of ergodic secrecy rates is problematic due to the necessity of
averaging over the channel realizations. Hence, asymptotic approaches based on methods
from \emph{random matrix theory}~\cite{couillet2011random} have been proposed to
circumvent these difficulties. Typically, such techniques assume that the number of
antennas at the transmitter and the receiver tend to infinity at a constant rate. Then,
an explicit expression -- or a \emph{deterministic equivalent} -- of the ergodic mutual
information (MI) is obtained. The expression is then shown to describe well
the behavior of the systems with realistic (finite) numbers of antennas.

% Contribution
In this paper, we make a first step in studying the problem of the ergodic secrecy rate maximization under power constraint
in MIMO wiretap channels. After computing the deterministic equivalents of the two MIMO channels, we address
the problem of the transmit precoder optimization. We further show that despite being
capacity achieving for a point-to-point MIMO channel, the water-filling strategy
becomes a poor choice in the wiretap setting. For instance, under the
assumption that the transmitter performs the GSVD-based beamforming, we derive the
ergodic-secrecy-rate maximizing transmit covariance matrix, which outperforms the water-filling solution.

\section{System Model}\label{sec:sysMod}
\begin{figure}
\centering
  \psfrag{A}[][][1]{\textsf{Alice}}
  \psfrag{B}[][][1]{\textsf{Bob}}
  \psfrag{E}[][][1]{\textsf{Eve}}
  \psfrag{Hm}[][][1]{$\bH_{\main}$}
  \psfrag{He}[][][1]{$\bH_{\eave}$}
  \psfrag{nm}[][][1]{$\bn_{\main}$}
  \psfrag{ne}[][][1]{$\bn_{\eave}$}
\includegraphics[width=9cm]{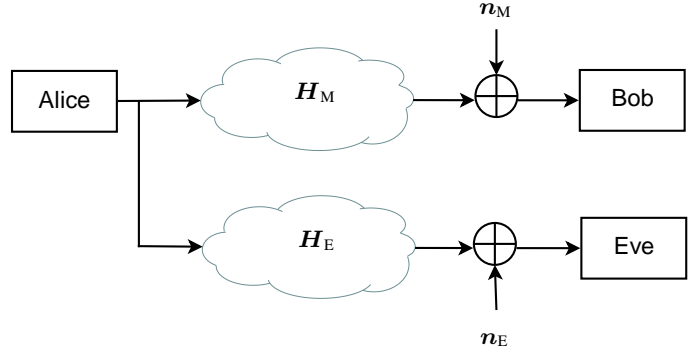}
  \caption{The MIMO wiretap channel.}
  \label{fig:wiretapChannel}
  \vspace{-0.5cm}
\end{figure}

Consider a scenario, where Alice, equipped with an $M$-antenna transmitter, wants to communicate a message to Bob, who is equipped with an $\Nm$-antenna receiver. The message has to be kept secret from unauthorized parties. Meanwhile, Eve tries to eavesdrop the message with the aid of an $\Ne$-antenna receiver. The corresponding setup, depicted in Fig.~\ref{fig:wiretapChannel}, has the following channel model
\begin{subequations} \label{eqn:channelWiretap}
\begin{align}
    \bym =& \bHm \bx + \bnm, \label{eqn:channelMain}\\
    \bye =& \bHe \bx + \bne,   \label{eqn:channelEve}
\end{align}
\end{subequations}
where $\bx \sim \calC\calN(\bzero_{M},\bI_{M})$, $\bnm \sim \calC\calN(\bzero_{\Nm},\bI_{\Nm})$, $\bne \sim \calC\calN(\bzero_{\Ne},\bI_{\Ne})$, and the Kronecker model~\cite{chizhik2000effect} is used, \viz,
\begin{subequations} \label{eqn:kronecker}
\begin{align}
    \bHm =& \sqrt{\frac{\rhom}{M}} \bRm^{1/2} \bWm \bTm^{1/2} \in \Complex{\Nm \times M},\\
    \bHe =& \sqrt{\frac{\rhoe}{M}} \bRe^{1/2} \bWe \bTe^{1/2} \in \Complex{\Ne \times M},
\end{align}
\end{subequations}
where $\bTm$ and $\bRm$ are the transmit and receive correlation matrices of the channel between Alice and Bob, $\bTe$ and $\bRe$ are the transmit and receive correlation matrices of the channel between Alice and Eve, while $\bWm$ and $\bWe$ have i.i.d. $\calC\calN(0,1)$ entries. The channel described by~\eqref{eqn:channelMain} is referred to as the \emph{main channel}, whereas the channel described by~\eqref{eqn:channelEve} is called the \emph{eavesdropper channel}.

For a given transmit covariance matrix, $\bP \triangleq \E \{\bx\bx^{\H}\}$, under the assumption that Alice uses Gaussian signals, the per-antenna achievable ergodic secrecy rate is expressed as
\begin{equation}
    \Rs =  \frac{1}{M} \Bigg[\E_{\bWm} \bigg\{\!\log \det (\bI_{\Nm} + \bHm \bP \bHmH)\!\bigg\}\! - \!\E_{\bWe} \bigg\{\!\log \det (\bI_{\Ne} + \bHe \bP \bHeH)\!\bigg\}\Bigg]^+\!\!\!,  \label{eqn:ergodicSecrecyRate}
\end{equation}
where $[\cdot]^+ = \max\{0,\cdot\}$. Note here the difference to~\cite{fakoorian2012optimal}, where quasi-static fading scenario was considered.

For practical reasons, covariance matrix $\bP$ is assumed to be designed based on the long-term \emph{statistical} CSI, namely, $\{\rhom,\rhoe,\bTm,\bTe,\bRm,\bRe\}$. Note, however, that in order to construct proper wiretap codes, Alice must have access to the \emph{instantaneous} CSI, $\{\bHm,\bHe\}$. Thus, the obtained result is regarded as a computationally efficient lower bound on the achievable secrecy rates.

By choosing the proper covariance matrix $\bP$, one can maximize the achievable secrecy rate of the wiretap channel~\refeqn{eqn:channelWiretap}. The corresponding optimization problem is formulated as

\begin{equation}\label{eqn:optSecrecyCapacity}
\begin{aligned}
    &\underset{\bP}{\max}
    & & \Rs\\
    &\textrm{s.t.}
    & & \tr \{\bP\} \leq M \\
    & & & \bP \succeq \bzero_M.
\end{aligned}
\end{equation}

Unfortunately, the objective function of the above problem has no explicit expression. To evaluate it, one has to perform averaging over the distribution of $\bWm$ and $\bWe$ using, \eg, Monte-Carlo simulation. This approach is, however, quite time-consuming and inefficient. Therefore, a new approach has to be applied to maximize the ergodic secrecy rate. In the following section, we present an asymptotic expression for the ergodic secrecy rate in the limit where dimensions of the channel matrix grow infinitely large.

\section{Achievable Ergodic Secrecy Rate}\label{sec:secrecyRate}
In this section, we provide the large-system approximation for the ergodic secrecy rate of a finite-antenna wiretap channel. We start with the following definition.
\begin{definition}
Given the wiretap channel~\refeqn{eqn:channelWiretap}, the \emph{large-system limit} (LSL) is defined as a regime, where
\begin{align}
    \Nm =&\; \betam M \to \infty,
    & \betam =&\; \const,\\
    \Ne =&\; \betae M \to \infty,
    & \betae =&\; \const.
\end{align}
That is, the numbers of antennas on each side of the channels grow large without bound at constant ratios.
\end{definition}

Based on the above definition, the following proposition presents the large-system approximation for the ergodic MI.
\begin{proposition}\label{thm:lslApproximation}
In the LSL, the following holds
\begin{align}
    \Rs &- \sqbrc{\Im(\rhom) - \Ie(\rhoe)}^+ \to 0,
\end{align}
where
\begin{subequations}\label{eqn:approximationMi}
\begin{align}
    \Im(\rhom) =&\; \frac{1}{M} \log\det\brc{\bI_M + \betam \Em\bTm \bP} + \frac{1}{M} \log\det\brc{\bI_{\Nm} + \deltam\bRm} - \frac{\betam}{\rhom} \deltam \Em \\
    \Ie(\rhoe) =&\; \frac{1}{M} \log\det\brc{\bI_M + \betae \Ee\bTe \bP} + \frac{1}{M} \log\det\brc{\bI_{\Ne} + \deltae\bRe} - \frac{\betae}{\rhoe} \deltae \Ee,
\end{align}
\end{subequations}
and sets of parameters $\{\Em,\deltam\}$ and $\{\Ee,\deltae\}$ form the unique solutions to the following two systems of equations
\begin{subequations}\label{eqn:approximationFpMain}
\begin{align}
    \Em =&\; \frac{\rhom}{\Nm}\tr\figbrc{\bRm\brc{\bI_{\Nm}+\deltam\bRm}^{-1}},\\
    \deltam =&\; \frac{\rhom}{M}\tr\figbrc{\bTm^{1/2}\bP \bTm^{1/2}\brc{\bI_{M}+\betam \Em\bTm^{1/2}\bP \bTm^{1/2}}^{-1}},
\end{align}
\end{subequations}
\vspace{-0.5cm}
\begin{subequations}\label{eqn:approximationFpEve}
\begin{align}
    \Ee =&\; \frac{\rhoe}{\Ne}\tr\figbrc{\bRe\brc{\bI_{\Ne}+\deltae\bRe}^{-1}},\\
    \deltae =&\; \frac{\rhoe}{M}\tr\figbrc{\bTe^{1/2}\bP \bTe^{1/2}\brc{\bI_{M}+\betae \Ee\bTe^{1/2}\bP \bTe^{1/2}}^{-1}},
\end{align}
\end{subequations}
\end{proposition}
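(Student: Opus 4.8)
\smallskip
\noindent\textit{Proof (sketch).}\enspace
The plan is to decouple the two ergodic mutual-information terms in~\refeqn{eqn:ergodicSecrecyRate}, approximate each by a deterministic equivalent borrowed from random matrix theory, and then recombine. Put
\begin{align*}
    J_{\main} &\triangleq \frac{1}{M}\,\E_{\bWm}\figbrc{\log\det\brc{\bI_{\Nm} + \bHm\bP\bHmH}}, &
    J_{\eave} &\triangleq \frac{1}{M}\,\E_{\bWe}\figbrc{\log\det\brc{\bI_{\Ne} + \bHe\bP\bHeH}},
\end{align*}
so that $\Rs = [\,J_{\main} - J_{\eave}\,]^+$. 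Because $\bigl|[a]^+ - [b]^+\bigr| \le |a-b|$ for any two real numbers $a,b$, one has
\begin{equation*}
    \bigl|\,\Rs - [\Im(\rhom) - \Ie(\rhoe)]^+\,\bigr| \;\le\; \bigl|\,J_{\main} - \Im(\rhom)\,\bigr| + \bigl|\,J_{\eave} - \Ie(\rhoe)\,\bigr| ,
\end{equation*}
and it remains only to show that each term on the right-hand side vanishes in the LSL. By symmetry I discuss just the main channel.

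First I would fold the precoder into the transmit-correlation matrix: using $\det(\bI+\bA\bB)=\det(\bI+\bB\bA)$ together with the Kronecker model~\refeqn{eqn:kronecker},
\begin{equation*}
    \log\det\brc{\bI_{\Nm} + \bHm\bP\bHmH} = \log\det\brc{\bI_{\Nm} + \tfrac{\rhom}{M}\,\bRm^{1/2}\bWm\,\widetilde{\bTm}\,\bWm^{\H}\bRm^{1/2}}, \qquad \widetilde{\bTm}\triangleq\bTm^{1/2}\bP\bTm^{1/2}\succeq\bzero_M ,
\end{equation*}
so that $J_{\main}$ is the normalized ergodic mutual information of a doubly-correlated (Kronecker) Gaussian MIMO channel with receive correlation $\bRm$ and effective transmit correlation $\widetilde{\bTm}$. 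For such a channel the deterministic equivalent of the ergodic mutual information is a classical result of random matrix theory~\cite{couillet2011random}: provided the spectral norms $\norm{\bRm}$, $\norm{\bTm}$, $\norm{\bP}$ stay bounded as $M\to\infty$ (with $\tr\bP\le M$ fixing the scaling), one has in the LSL
\begin{equation*}
    J_{\main} - \brc{\tfrac1M\log\det\brc{\bI_M + \betam\Em\widetilde{\bTm}} + \tfrac1M\log\det\brc{\bI_{\Nm}+\deltam\bRm} - \tfrac{\betam}{\rhom}\,\deltam\Em} \to 0 ,
\end{equation*}
where $(\Em,\deltam)$ is the unique pair of nonnegative reals solving the coupled fixed-point system~\refeqn{eqn:approximationFpMain}, existence and uniqueness following from a standard monotonicity (Yates' standard-interference-function) argument on the associated Stieltjes transform. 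Since $\widetilde{\bTm}$ and $\bTm\bP$ share the same nonzero eigenvalues, $\log\det(\bI_M+\betam\Em\widetilde{\bTm}) = \log\det(\bI_M+\betam\Em\bTm\bP)$, so the bracketed quantity is precisely $\Im(\rhom)$ of~\refeqn{eqn:approximationMi}. The same argument on the eavesdropper channel gives $J_{\eave} - \Ie(\rhoe)\to0$, and combined with the displayed inequality this proves the claim.

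The technical heart is the single-link deterministic equivalent, which I would quote as known rather than reprove. A self-contained derivation would require: (i) a concentration step showing $\tfrac1M\log\det(\bI_{\Nm}+\bHm\bP\bHmH)$ is close to its mean, \eg\ via the Poincar\'e--Nash (or an Efron--Stein) inequality in the i.i.d.\ Gaussian entries of $\bWm$; (ii) identification of the limiting resolvent of $\bHm\bP\bHmH$ through the coupled equations~\refeqn{eqn:approximationFpMain}, with the approximation error controlled by the Bai--Silverstein ``guess-and-verify'' technique or by Gaussian interpolation; and (iii) integration of the resolvent estimate over the spectral parameter to pass from the Stieltjes transform to the mutual information --- the step responsible for the sum of two log-determinants and for the correction $-\tfrac{\betam}{\rhom}\deltam\Em$. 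Everything else --- absorbing $\bP$ into the correlation, the determinant identities, and the Lipschitz bound on $[\cdot]^+$ --- is elementary, so passing the difference of two convergent mutual informations through $[\cdot]^+$ poses no further difficulty.
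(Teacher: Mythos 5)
Your proposal is correct and follows essentially the same route as the paper: both quote the known deterministic equivalent (Stieltjes/Shannon transform) for a Kronecker-correlated channel from Couillet--Debbah--Silverstein, fold the precoder into an effective transmit correlation $\bT^{1/2}\bP\,\bT^{1/2}$, apply it separately to the main and eavesdropper links, and combine. Your explicit Lipschitz bound on $[\cdot]^+$ and the clean statement $\log\det(\bI_M+\betam\Em\bTm^{1/2}\bP\bTm^{1/2})=\log\det(\bI_M+\betam\Em\bTm\bP)$ are small refinements of steps the paper leaves implicit (it even writes the absorption loosely as replacing $\bT$ by $\bT\bP^{1/2}$), but the substance is identical.
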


\begin{proof}
The proof is based on the concept of a deterministic equivalent~\cite{hachem2007deterministic},~\cite{couillet2011deterministic}. Consider a matrix of the following type
\begin{equation}
    \bB = \bR^{1/2}\bW \bT \bW^{\H} \bR^{1/2},
\end{equation}
where $\bW$ is a random matrix consisting of i.i.d. entries with zero mean and variance $1/M$, while $\bT$ and $\bR$ are Hermitian non-negative definite of bounded normalized trace. The latter are assumed to be generated by tight sequences~\cite{billingsley2008probability}. Moreover, we assume that $\exists \; b>a>0$, such that
\begin{equation}
    a < \lim\inf_N \beta < \lim\sup_N \beta < b,
\end{equation}
where $\beta \triangleq N/M$. As shown in Corollary 1 in~\cite{couillet2011deterministic}, when $N$ and $M$ grow large without bound at ratio $\beta$, the following holds
\begin{equation}
    m(-x) - m^{\circ}(-x) \to 0
\end{equation}
almost surely, where $m(-x)$ is the Stieltjes transform of $\bB$ for $x > 0$ and
\begin{equation}
    m^{\circ}(-x) = \frac{1}{M} \tr \figbrc{\brc{\bI_N + \delta \bR}^{-1}},
\end{equation}
where $e$ and $\delta$ form a unique solution of the following system of fixed-point equations
\begin{subequations}
\begin{align}\label{eqn:eqsTransformStiltjes}
    e =&\; \frac{1}{N} \tr \figbrc{\frac{1}{x}\bR \brc{\bI_N + \delta \bR}^{-1}},\\
    \delta =&\; \frac{1}{M} \tr \figbrc{\frac{1}{x}\bT\brc{\bI_M + \beta e \bT}^{-1}},
\end{align}
\end{subequations}
which, according to Proposition 1 therein, could be solved \via an iterative algorithm always converging to a unique fixed point.

Meanwhile, from Theorem 2 in~\cite{couillet2011deterministic} it follows that under the aforementioned assumptions and some additional constraints on spectral radius of matrices $\bT$ and $\bR$, the Shannon transform~\cite{tulino2004random} of $\bB$ satisfies
\begin{equation}
    \calV(-x) - \calV^{\circ}(-x) \to 0
\end{equation}
almost surely, where
\begin{equation}\label{eqn:transformShannon}
    \calV^{\circ}(-x) = \frac{1}{M} \log\det\brc{\bI_M + \beta e\bT} + \frac{1}{M} \log\det\brc{\bI_{N} + \delta\bR} - x \beta \delta e.
\end{equation}
The above Shannon transform represents the asymptotic behavior of the mean MI in the LSL. Thus, having computed~\eqref{eqn:transformShannon} at $x = 1/\rho$, with parameters satisfying~\eqref{eqn:eqsTransformStiltjes}, we can evaluate the ergodic MI of each MIMO channel within our wiretap model (viz., the main and eavesdropper's channels). To address the influence of the transmit covariance matrix, it suffices to consider $\bT\bP^{1/2}$ instead of $\bT$ for both channels. This leads us exactly to~\eqref{eqn:approximationMi},~\eqref{eqn:approximationFpMain} and~\eqref{eqn:approximationFpEve}, thereby completing the proof.
\end{proof}

\section{Transmit Covariance Optimization}\label{sec:optCovariance}
Based upon the asymptotic analysis carried out in the previous section, here we address the problem of transmit covariance optimization~\refeqn{eqn:optSecrecyCapacity}. As mentioned before, working directly with~\refeqn{eqn:ergodicSecrecyRate} is prohibitive due to expectation operators therein. Moreover, as we have seen from the previous section, the influence of the random parts of the channels $\bWm$ and $\bWe$ vanishes in the LSL. Thus, the objective function of the corresponding optimization problem simplifies to
\begin{equation}\label{eqn:objective}
     \rs(\bP) = \frac{1}{M}\bigg[\log\det\brc{\bI_M + \betam \Em\bTm \bP} - \log\det\brc{\bI_M + \betae \Ee\bTe \bP}\bigg]^+.
\end{equation}
Note that here, we consider $\Em$ and $\Ee$ as independent of the optimization variable $\bP$ due to the following reason. The optimal solution of the optimization problem has to satisfy the KKT conditions, which require that $\nabla_{\bP}\rs(\bP) = \bzero$. When taking into account the dependence of $\Em$ and $\Ee$ on $\bP$, one has to take the derivatives of $\rs(\bP)$ w.r.t. the former. However, it can be verified that those are zero, and hence interdependence between $\Em$, $\Ee$ and $\bP$ does not play any role in the optimization.

Unfortunately, since the problem is non-convex, finding the optimal covariance of $\bx$ is difficult. Hence, we will provide several suboptimal solutions that give a lower bound on the secrecy capacity of the ergodic MIMO wiretap channel.

\subsection{Water-Filling over the Main Channel}
Isotropic transmission is the simplest strategy Alice can perform. However, it is not capacity achieving even for a generic MIMO channel. Instead, based on the statistical CSI of the main channel, $\{\bTm,\bRm\}$, Alice can perform SVD $\sqrt{\betam\Em}\bTm^{1/2} = \bU \bSigma \bV^{\H}$, where $\bU$ and $\bV$ are orthonormal matrices. Then, optimal transmit covariance is given by the \emph{water-filling} (WF) solution as follows
\begin{equation}
    \bPwf = \bV \bSigma_{\bP} \bV^{\H},
\end{equation}
where $[\bSigma_{\bP}]_{m,m} = \sqbrc{\mu^{-1} - [\bSigma]_{m,m}^{-1}}^+$, and $\mu$ is chosen to satisfy the power constraint. In this case Alice acts as if Eve did not exist, achieving the ergodic capacity of the main channel. However, in the presence of an eavesdropper this strategy may be quite inefficient, as we shall see later on.

\subsection{GSVD-Based Precoder}
Consider the scenario where the transmitter performs GSVD on the matrices related to channels~\eqref{eqn:channelMain} and~\eqref{eqn:channelEve}. Although the solution based on this assumption is suboptimal, it is advantageous, as compared to the isotropic precoding. Moreover, it takes into account the presence of the eavesdropper and can potentially increase the ergodic secrecy rate as compared to the WF precoder.

When applied to~\eqref{eqn:objective}, the GSVD-based beamforming method is realized as follows. Based on the statistical CSI of both channels, $\{\bTm,\bRm,\bTe,\bRe\}$, Alice performs GSVD on matrices $\sqrt{\betam\Em}\bTm^{1/2}$ and $\sqrt{\betae\Ee}\bTe^{1/2}$
\begin{align}
     \sqrt{\betam\Em}\bTm^{1/2} =&\; \bUm \bSigmam \bV^{\H},\\
     \sqrt{\betae\Ee}\bTe^{1/2} =&\; \bUe \bSigmae \bV^{\H},
\end{align}
where $\bSigmam^{\T}\bSigmam + \bSigmae^{\T}\bSigmae = \bI_M$. The above GSVD simultaneously diagonalizes $\bTm^{1/2}$ and $\bTe^{1/2}$, converting those into a set of parallel subchannels. Then, the transmitted vector is constructed as $\bx = \bV^{-\H}\bs$, where $\bs\sim\calC\calN(\bzero_M,\bP)$ and $\bP$ is a positive semi-definite diagonal matrix representing the power allocation across the subchannels. For the above beamforming strategy, the optimal power allocation was derived in~\cite{fakoorian2012optimal} (here we have corrected the minor typo therein)
\begin{equation}
    [\bPgsvd]_{i,i} = \frac{1}{2}\sqbrc{\sign(\bsigmam{i}\!-\!\bsigmae{i})\!+\! 1}\!\! \sqbrc{\frac{-\!1\!+\!\sqrt{1\!-\!4\bsigmam{i}\bsigmae{i}\!+\!\frac{4 (\bsigmam{i}-\bsigmae{i})\bsigmam{i}\bsigmae{i}}{\log(2)\mu v_{i}}}}{2\bsigmam{i}\bsigmae{i}}}^+\!\!\!\!,
\end{equation}
where $\bsigmam{i}$, $\bsigmae{i}$ and $v_{i}$ are the $i$th diagonal entries of $\bSigmam^{\T}\bSigmam$, $\bSigmae^{\T}\bSigmae$ and $\bV^{-1}\bV^{-\H}$, respectively, and $\mu$ is chosen to satisfy the power constraint at the transmitter.

\begin{figure}
\centering
\includegraphics[width=9cm]{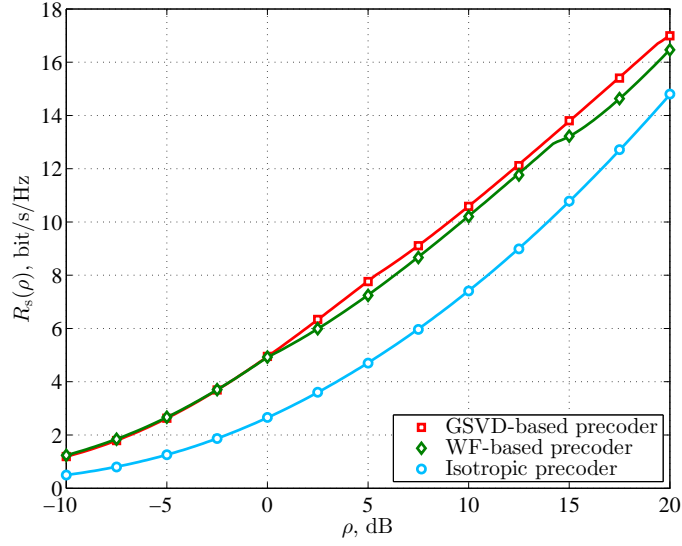}
  \caption{Ergodic secrecy rate \vs SNR ($\rhom = \rhoe = \rho$) for a MIMO wiretap channel with $M = 6$, $\Nm = 6$ and $\Ne = 2$ antennas. Transmit side correlation parameters: $d_{\lambda} = 1$, $\theta_{\main} = 40^{\circ}$, $\theta_{\eave} = -10^{\circ}$, $\Delta_{\main} = \Delta_{\eave} = 5^{\circ}$. Solid curves denote analytic results, while markers denote simulated values averaged over 10 000 channel realizations.}
  \label{fig:resultMiWiretapSnrLsl6x6x2}
  \end{figure}

\section{Numerical Results}\label{sec:numResults}

In this section, we provide results based on numerical simulations along with some discussion. As seen from the objective function~\eqref{eqn:objective}, spatial correlation at the receiver side has no effect on the precoding design. Hence, for the sake of simplicity, we assume that $\bRm = \bI_{\Nm}$ and $\bRe = \bI_{\Ne}$. Meanwhile, correlation at the transmitter side is assumed to be generated by a uniform linear antenna array with \emph{Gaussian power azimuth spectrum}~\cite{wen2007asymptotic}, so that the entries of correlation matrices $\bTm$ and $\bTe$) are obtained by
\begin{equation}\label{eqn:correlation}
    \sqbrc{\bT}_{a,b} = \frac{1}{2 \pi \Delta^2} \int_{-\pi}^{\pi} e^{2 \pi \j d_{\lambda}(a-b)\sin(\phi) - \frac{(\phi-\theta)^2}{2\Delta^2}} \d \phi,
\end{equation}
where $d_{\lambda}$ is the relative antenna spacing (in wavelengths $\lambda$), $\theta$ is the mean angle and $\Delta^2$ is the mean-square angle spread.

\begin{figure}
\centering
\includegraphics[width=9cm]{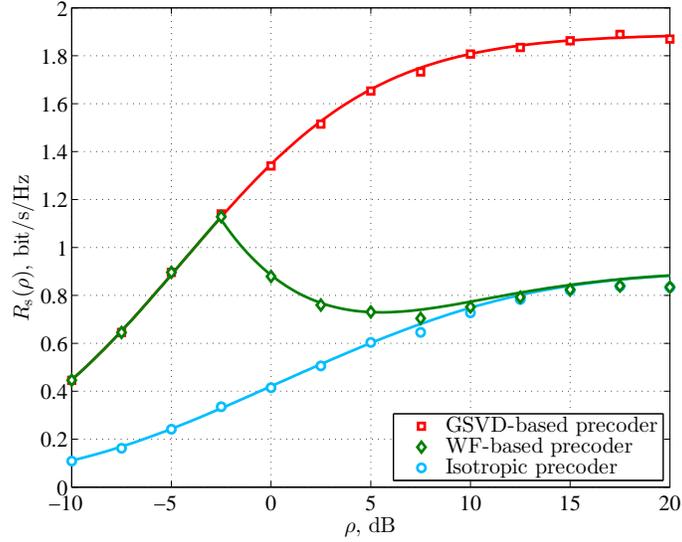}
  \caption{Ergodic secrecy rate \vs SNR ($\rhom = \rhoe = \rho$) for a MIMO wiretap channel with $M = 2$, $\Nm = 3$ and $\Ne = 4$ antennas. Transmit side correlation parameters: $d_{\lambda} = 1$, $\theta_{\main} = 40^{\circ}$, $\theta_{\eave} = -10^{\circ}$, $\delta_{\main} = \Delta_{\eave} = 5^{\circ}$. Solid curves denote analytic results, while markers denote simulated values averaged over 10 000 channel realizations.}
  \label{fig:resultMiWiretapSnrLsl}
\end{figure}

First, we plot in Fig.~\ref{fig:resultMiWiretapSnrLsl6x6x2}, the dependence of the ergodic secrecy rate on the SNR. The transmit side correlation parameters are set as follows. The antenna numbers are set to $M = 6$, $\Nm = 6$ and $\Ne = 2$. The antenna spacing is set to one wavelength, the mean angles are set to $\theta_{\main} = 40^{\circ}$, $\theta_{\eave} = -10^{\circ}$ and the root-mean-square angle spread is chosen for both channels to be $\Delta_{\main} = \Delta_{\eave} = 5^{\circ}$. From the figure, we see that the the results derived in the LSL (solid lines) match the simulations (markers) quite well even for relatively small numbers of antennas. Moreover, we also see that ``statistical" water-filling over the main channel performs well, approaching the performance of the GSVD-based precoding. The isotropic precoder also achieves quite high ergodic secrecy rates, which can be explained by a small number of antennas at the eavesdropper.

Fig.~\ref{fig:resultMiWiretapSnrLsl} depicts similar dependence of the ergodic secrecy rate~\eqref{eqn:ergodicSecrecyRate} on the SNR with different network parameters. The transmit side correlation parameters are chosen similar to the previous case, while the antenna numbers are set to $M = 2$, $\Nm = 3$ and $\Ne = 4$. From the figure we see that water-filling over the main channel is far from being optimal in this case. This can be explained by the fact that in this setting Eve has many antennas and is therefore quite powerful in terms of eavesdropping capabilities. Hence, maximizing the data rate of the main channel, while ignoring the eavesdropper, is a poor strategy in this case. The same observation applies to isotropic precoding, which performs even worse. On the other hand, ``statistical" GSVD-based beamforming proves the most efficient among the considered strategies.

To emphasize the advantage of the GSVD we plot the ergodic secrecy rate as a function of the number of antennas at Eve's receiver, $\Ne$, in Fig.~\ref{fig:resultMiWiretapMLsl}. We fix $d_{\lambda} = 1$ and keep the same parameters as in the previous figure. From Fig.~\ref{fig:resultMiWiretapMLsl} we see that both the isotropic precoding and water-filling cannot provide strictly positive ergodic secrecy rates when $\Ne$ grows large. At the same time we observe that GSVD-based precoding allows to efficiently allocate the power to achieve strictly positive ergodic secrecy rates even when $\Ne$ becomes much larger than $M$ and $\Nm$.

In Fig.~\ref{fig:resultMiWiretapDLsl}, we plot the ergodic secrecy rate $\Rs$ against the spacing between the neighboring antennas within the array. The rest of the transmit-side correlation parameters remain unchanged and the SNR is set to $\rho = 0$ dB. Firstly, we note that the achievable ergodic secrecy rates are non-convex and non-monotone functions of the antenna spacing. Similar behavior was previously observed in~\cite{moustakas2005statistical} and, moreover, the results obtained \via the asymptotic approximation (solid lines) are confirmed with the Monte-Carlo simulation results (markers). Nevertheless, quite interestingly, it can be observed that at low SNR, the optimized secrecy rates are significantly higher than those obtained by the isotropic precoding. Moreover, those are even higher than the secrecy capacity of an uncorrelated wiretap channel, meaning that it can be advantageous to have correlation at low SNR, provided that the transmit covariance is optimized. Finally, we point out that again, as expected, the GSVD-based beamforming reveals to be the most efficient among other choices.

\begin{figure}
\centering
\includegraphics[width=9cm]{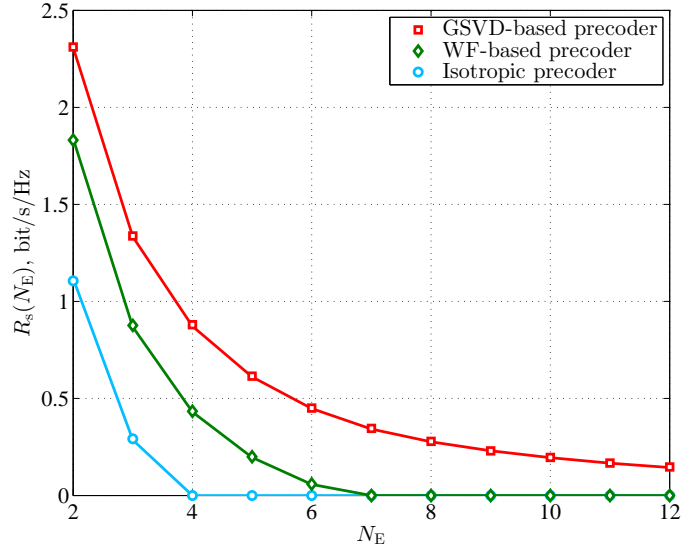}
  \caption{Ergodic secrecy rate \vs number of Eve's antennas $\Ne$ for a MIMO wiretap channel with $M = \Nm = 4$ antennas in the main channel. Transmit side correlation parameters: $d_{\lambda} = 1$, $\theta_{\main} = 40^{\circ}$, $\theta_{\eave} = -10^{\circ}$, $\Delta_{\main} = \Delta_{\eave} = 5^{\circ}$. SNR is set to $\rhom = \rhoe = 0$ dB. Solid curves denote analytic results, while markers denote simulated values averaged over 10 000 channel realizations.}
  \label{fig:resultMiWiretapMLsl}
\end{figure}

\section{Conclusions}\label{sec:conclusion}
In the present paper, we have studied the ergodic secrecy rate of a multi-antenna wiretap channel. Using the theory of deterministic equivalents, we have obtained the large-system approximation of the achievable ergodic secrecy rate, which holds when the numbers of antennas at each terminal grow very large at constant ratios. The approximation proved accurate even for small numbers of antennas, thereby simplifying the computationally demanding problem of transmit covariance optimization. First, not only the objective function of the corresponding optimization problem has closed-form expression, but it has interesting properties attributed to log-det expressions. Secondly, the objective depends only on the correlation matrices of the channels, which can be known at the transmitter by the widely adopted statistical CSI assumption. Once the approximation was obtained, we were able to use some existing algorithms for the covariance optimization. In particular, we have shown that GSVD-based beamforming performs well, compared to, \eg, water-filling over the main channel.

\begin{figure}
\centering
\includegraphics[width=9cm]{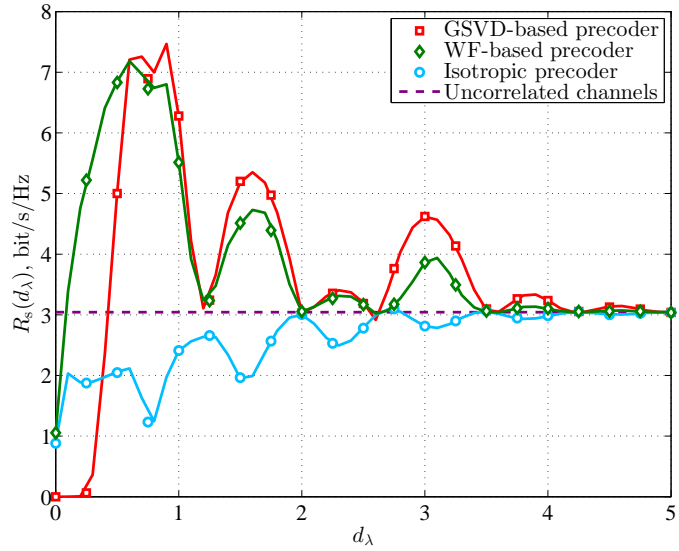}
  \caption{Ergodic secrecy rate \vs antenna spacing $d_{\lambda}$ for a MIMO wiretap channel with $M = 4$, $\Nm = 4$ and $\Ne = 2$ antennas. Transmit side correlation parameters: $\theta_{\main} = 40^{\circ}$, $\theta_{\eave} = -10^{\circ}$, $\Delta_{\main} = \Delta_{\eave} = 5^{\circ}$. SNR is set to $\rhom = \rhoe = 0$ dB. Solid curves denote analytic results, while markers denote simulated values averaged over 10 000 channel realizations.}
  \label{fig:resultMiWiretapDLsl}
\end{figure}

\end{document}